\documentclass[draftcls,12pt,onecolumn]{IEEEtran}

\usepackage{times,amsmath,color,amssymb,graphicx,epsfig,cite,psfrag,enumerate}
\usepackage{subfigure,multirow,bm}
\newtheorem{Remark}{\it Remark}[section]
\newtheorem{Theorem}{\it Theorem}[section]

\newtheorem{Lemma}{\it Lemma}[section]

\begin{document}
%
\title{Asymptotic Capacity of Large Relay Networks with Conferencing Links}


%
%

\author{\IEEEauthorblockN{Chuan Huang,~\IEEEmembership{Student Member,~IEEE,}~Jinhua Jiang,~\IEEEmembership{Member,~IEEE,} \\ Shuguang
Cui},~\IEEEmembership{Member,~IEEE}

\thanks{Chuan Huang and Shuguang Cui are with the Department of Electrical
and Computer Engineering, Texas A\&M University, College Station,
TX, 77843. Emails: \{huangch, cui\}@tamu.edu.}

\thanks{Jinhua Jiang is with the Department of Electrical Engineering,
Stanford University, Stanford, CA, 94305, Email:
jhjiang@stanford.edu.}}

%

\maketitle
\begin{abstract}
In this correspondence, we consider a half-duplex large relay network, which consists of one source-destination pair and $N$ relay nodes, each of which is connected with a subset of the other relays via signal-to-noise ratio (SNR)-limited out-of-band conferencing links. The asymptotic achievable rates of two basic relaying schemes with the ``$p$-portion'' conferencing strategy are studied: For the decode-and-forward (DF) scheme, we prove that the DF rate scales as $\mathcal{O} \left( \log (N) \right)$; for the amplify-and-forward (AF) scheme, we prove that it asymptotically achieves the capacity upper bound in some interesting scenarios as $N$ goes to infinity.
\end{abstract}

\begin{IEEEkeywords}
Large relay networks, conferencing, asymptotic, decode-and-forward, amplify-and-forward.
\end{IEEEkeywords}


\section{Introduction}

The concept of relay has already been adopted in most beyond-3G wireless technologies such as WiMAX and 3GPP UMTS Long Term Evolution (LTE) to provide coverage extension and increase capacity. From the information-theoretical viewpoint, the capacity bounds of the traditional three-node relay channel have been well studied under both the full-duplex mode \cite{cover} and the half-duplex mode \cite{madsen}, and various achievable schemes, such as decode-and-forward (DF), compress-and-forward (CF), and amplify-and-forward (AF) have been proposed. For the four-node diamond relay channel, the capacity upper bounds and achievable rates were first investigated in \cite{schein}, and then in \cite{xue,resaei}.

For the large relay networks with $N$ relay nodes, the asymptotic capacity bounds were studied in \cite{Gastpar1,blocskei,chuan1,coso}.
Considering the joint source channel coding problem for a special class of
Gaussian relay networks \cite{Gastpar1}, the
capacity upper bound is asymptotically achieved by the AF relaying scheme
as the number of relays tends to infinity. For general Gaussian relay networks,
the authors in \cite{blocskei} obtained the achievable rate scaling law for the multiple-input and multiple-output (MIMO) relay networks with AF: For the coherent relaying case, with full forward-link channel state information (CSI) in the relays, the AF achievable rate scales as $\mathcal{O} \left( \log (N) \right)$; for the noncoherent relaying case with zero forward-link CSI in the relays, it scales as $\mathcal{O} \left( \log (1) \right)$. In \cite{coso}, the authors studied the
scaling laws of the DF, CF, and linear relaying schemes, and proved that the DF rate scales at most as $\mathcal{O} \left( \log \left( \log(N) \right) \right)$ for the coherent relaying scheme. The authors in \cite{chuan1} mainly focused on the noncoherent case, and proved that the DF relaying scheme asymptotically achieves the capacity upper bound.

In practical wireless communication systems, some nodes might have the capability to exchange certain information with other nodes via extra out-of-band connections, e.g., through internet, WiFi, optical fiber, etc. From the information-theoretical viewpoint, such kind of interaction can be modeled as node conferencing \cite{cmac,ron,maric,Denus}. Specifically, for the multiple access channel (MAC) \cite{cmac}, encoder conferencing was used to exchange part of the source messages. For the broadcast channel (BC) in \cite{ron}, the decoders were designed to first compress the received signals, and then transmit the corresponding binning index numbers to the other through the conferencing links. Moreover, in \cite{maric} and \cite{Denus}, the achievable rates of compound MAC with transmitter and receiver conferencing were discussed, respectively, and some capacity results for the degraded cases were established.

In \cite{chuan2}, the authors investigated the achievable rates for the four-node diamond relay channel with rate-limited out-of-band conferencing links between the two relays, and it was shown that the DF scheme could achieve the cut-set bound even with finite conferencing link rates for the discreet memoryless channel case. In this correspondence, we extend these results to the large Gaussian relay networks with signal-to-noise ratio (SNR)-limited conferencing links among the relays, and focus on the asymptotic achievable rates of the DF and AF schemes. It is shown that the relay conferencing can improve these achievable rates, and some asymptotic capacity results can be established under certain conditions.

The rest of the paper is organized as follows. In Section II,
we introduce the assumptions and channel models. In Section III, we discuss the DF and AF achievable rates. In Section IV, we present some simulation and numerical results. Finally, the paper is concluded in Section V.

Notations: $X_N \xrightarrow{w.p.1} a$ means $X_N \rightarrow a$ with probability 1, as $N \rightarrow +\infty $; $A_N \sim B_N$ means $\lim_{N \rightarrow +\infty} |A_N - B_N| = 0$; $y_N \sim \mathcal{O} \left( \log (x_N) \right)$ means $\lim_{N \rightarrow + \infty} \frac{x_N}{y_N} = c$, where $c$ is a positive constant.

\section{Assumptions and System Model}

In this paper, we consider a large relay network with out-of-band conferencing links among the relays, as shown in Fig. \ref{fig1}, which contains one source-destination pair, and $N$ relays. We assume that there is no direct link between the source and destination. The relay nodes work in a half-duplex mode: The source transmits and the relays listen in the first time slot; the relays simultaneously transmit and the destination listens in the second time slot. For simplicity, we allocate equal time durations to the two hops \cite{blocskei,chuan1}.

The time scheduling of the transmissions at the source, relays, and conferencing links is shown in Fig. \ref{fig2}. Note that the conferencing links use out-of-band connections in relative to the source-relay links; thus, they can be allocated the same time slot. Due to the relay conferencing, there will be a one-block delay between the transmissions at the source and the relays, which requires the relays to buffer one block of source signals for each relaying operation. Assume that during each data block, the communication rate is $R$, and we need to transmit $B$ blocks in total. Thus, the average information rate is $R\frac{B}{B+1}\rightarrow R$, as $B$ goes to infinity, such that the effect of the one-block delay is negligible. In this correspondence, we focus on the one-block transmission to study the associated relaying and conferencing schemes without specifying the delay in the proof of the achievability.

We assume that each relay can conference with a subset of other relays via wired links. In this correspondence, we adopt a deterministic ``$p$-portion conferencing'' scheme: each relay can conference with other $M$ relays, and
\begin{align} \label{p_conf}
\lim_{N \rightarrow +\infty} \frac{M+1}{N} = p.
\end{align}
Without loss of generality, we assume that the $i$-th relay forwards its received signal to the relays with indices $(i+k),i=0,1,\cdots,N-1,$ and $k=1,\cdots,M$, via the conferencing links. With a little abuse of notation, we use $(i+k)$ to denote the $(i+k)_N$-th relay, where $(\cdot)_N$ means the modula over $N$. Particularly, when $N=M+1$, we call the scheme as ``complete conferencing''. Note that there exist many other conferencing schemes, i.e., random conferencing with any other $M$ relays, while the $p$-portion deterministic conferencing scheme is adopted here to simplify the analysis and provide a tractable achievable rate.

We further define the following channel input-output relationship. In the first hop, the received signal $y_i$ at the $i$-th relay, $i=1,2,\cdots,N$, is given as
\begin{eqnarray} \label{t1}
y_i=\sqrt{P_{S}}h_{i}x+n_i,
\end{eqnarray}
where $x$ is the signal transmitted by the source, $P_s$ is the transmit power at the source node, $h_{i}$ is the complex channel gain
of the $i$-th source-to-relay link, which is assumed known to the source, and $n_i$'s are the independently and identically distributed (i.i.d.) circularly symmetric complex Gaussian (CSCG) noise with distribution $\mathcal{CN}(0, N_0)$. Note that there are no particular assumptions on the distributions of $h_i$'s, which are just assumed to be independent, of zero-mean, and with uniformly and positively bounded second-order and fourth-order statistics, i.e., $ 0<b_1 \leq \mathbb{E} \left( |h_i|^2 \right) \leq b_2 <+\infty$ and $ 0<c_1 \leq \mathbb{E} \left( |h_i|^4 \right) \leq c_2 <+\infty$.

For the conferencing links, the received signal from the $i$-th relay to the $(i+k)$-th relay is given as
\begin{align} \label{tc}
y_{i,i+k}= \sqrt{ \frac{ P_c }{P_s \mathbb{E} \left( |h_{i+k}|^2 \right) + N_0 }  }  f_{i,i+k} y_{i} + n_{i,i+k},
\end{align}
where $f_{i,i+k}$ is the complex link gain, $n_{i,i+k}$ is the CSCG noise with distribution $\mathcal{CN}(0, N_0)$, and $P_c$ is the transmit power at the conferencing links. Here, the constant coefficient $\sqrt{ \frac{ P_c }{P_s \mathbb{E} \left( |h_{i+k}|^2 \right) + N_0 }  } $ is used to satisfy the average transmit power constraint of the conferencing link. Due to the out-of-band and possible wired conferencing link assumptions, we assume that $f_{i,i+k}$ is a fixed positive constant and uniformly and positively bounded (similarly as $\mathbb{E} \left( |h_i|^2 \right)$). Since the inputs of conferencing links may not be Gaussian, we adopt the transmit SNR $\frac{P_c}{N_0}$ as the quality metric of the conferencing links for convenience, instead of the rate constraints as in \cite{chuan2}.

In the second hop, $x_i$ with unit average power is
transmitted from the $i$-th relay to the destination, and the received signal $y$ at the destination is given as
\begin{eqnarray} \label{receive_signal_destination}
y=\sum_{i=1}^{N} {\sqrt{P_r} g_{i} x_i} + n,
\end{eqnarray}
where $g_{i}$ is the complex channel gain of the $i$-th relay-to-destination link, $P_r$ is the transmit power at each relay, and $n$ is the CSCG noise with distribution $\mathcal{CN} (0,N_0)$. We also assume that $g_i$'s are independent, of zero mean, and with uniformly and positively bounded $\mathbb{E} \left( |g_i|^2 \right)$ and $\mathbb{E} \left( |g_i|^4 \right) $.

Moreover, we assume that only the local CSIs are available at each relay: For the $i$-th relay, it knows the CSIs of the links directly connected with it, i.e., $h_{i-k}$, $f_{i-k,i},~k=0,1,\cdots,M$, and $g_i$.

\section{Capacity Upper Bound and Achievable Rates}

In this section, we exam the capacity upper bound and the achievable rates of the considered networks with the DF and AF relaying schemes, respectively. Moreover, we prove some capacity-achieving results under special conditions.

\subsection{Preliminary Results and Capacity Upper Bound}

In this subsection, we first present some preliminary results and the capacity upper bound.

\begin{Lemma} \label{sim_lemma1}
Let $\left\{ X_i \geq 0,~i=1,\cdots,N \right\}$ be independent random variables, whose means and variances are uniformly and positively bounded, respectively. Then, we have
\begin{align}
& \log \left( 1+  \sum_{i=1}^N X_i \right) - \log \left( 1+  \sum_{i=1}^N \mathbb{E} \left( X_i \right) \right) \xrightarrow{w.p.1} 0, \label{lemma1} \\
& \log \left(   \sum_{i=1}^N X_i \right) - \log \left(   \sum_{i=1}^N \mathbb{E} \left( X_i \right) \right) \xrightarrow{w.p.1} 0. \label{lemma2}
\end{align}
\end{Lemma}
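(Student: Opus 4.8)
\emph{Proof proposal.} The plan is to reduce both claims to a single strong law. Write $S_N := \sum_{i=1}^N X_i$ and $\mu_N := \sum_{i=1}^N \mathbb{E}(X_i)$. The hypotheses supply constants with $0 < b_1 \le \mathbb{E}(X_i) \le b_2 < \infty$ and $\mathrm{Var}(X_i) \le \sigma^2 < \infty$ uniformly in $i$; in particular $b_1 N \le \mu_N \le b_2 N$, so $\mu_N \to \infty$ at a linear rate, which is what makes the normalization below well behaved. The core of the argument is to show that $S_N/\mu_N \xrightarrow{w.p.1} 1$, after which \eqref{lemma1} and \eqref{lemma2} follow by continuity of the logarithm.

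For the strong law I would use a subsequence / Borel--Cantelli argument that exploits $X_i \ge 0$. Fix $\epsilon > 0$ and set $N_k = k^2$. By independence, $\mathrm{Var}(S_{N_k}) = \sum_{i=1}^{N_k}\mathrm{Var}(X_i) \le \sigma^2 N_k$, so Chebyshev's inequality gives $\mathbb{P}\big( |S_{N_k} - \mu_{N_k}| > \epsilon\,\mu_{N_k}\big) \le \sigma^2 N_k /(\epsilon^2 \mu_{N_k}^2) \le \sigma^2/(\epsilon^2 b_1^2 k^2)$, which is summable in $k$; Borel--Cantelli then yields $S_{N_k}/\mu_{N_k} \xrightarrow{w.p.1} 1$. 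To pass from the subsequence to the full sequence, note that $X_i \ge 0$ makes $N \mapsto S_N$ nondecreasing, so for $N_k \le N < N_{k+1}$ one has $S_{N_k}/\mu_{N_{k+1}} \le S_N/\mu_N \le S_{N_{k+1}}/\mu_{N_k}$; since $\mu_{N_{k+1}} - \mu_{N_k} \le b_2(N_{k+1}-N_k) = b_2(2k+1)$ while $\mu_{N_k} \ge b_1 k^2$, the ratio $\mu_{N_{k+1}}/\mu_{N_k} \to 1$, and squeezing gives $S_N/\mu_N \xrightarrow{w.p.1} 1$. (Alternatively, since $\sum_i \mathrm{Var}(X_i)/i^2 \le \sigma^2 \sum_i i^{-2} < \infty$, Kolmogorov's SLLN for independent summands gives $(S_N-\mu_N)/N \to 0$ a.s.\ directly, and dividing by $\mu_N/N \ge b_1$ finishes it.)

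It remains to convert this into the logarithmic statements. For \eqref{lemma2}, $\log S_N - \log \mu_N = \log(S_N/\mu_N) \to \log 1 = 0$ w.p.1 by continuity of $\log$ on $(0,\infty)$. For \eqref{lemma1}, write
\[
\log(1+S_N) - \log(1+\mu_N) = \log \frac{1+S_N}{1+\mu_N} = \log \frac{\mu_N^{-1} + S_N/\mu_N}{\mu_N^{-1} + 1};
\]
since $\mu_N^{-1} \to 0$ and $S_N/\mu_N \to 1$ w.p.1, the argument of the logarithm tends to $1$ w.p.1, hence the difference tends to $0$.

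I expect the only genuine subtlety to be the upgrade from convergence in probability (immediate from Chebyshev) to almost-sure convergence; the subsequence-plus-monotonicity device, available precisely because the $X_i$ are nonnegative, is the crucial point, while everything else is bookkeeping with the uniform bounds $b_1,b_2,\sigma^2$.
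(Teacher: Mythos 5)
Your argument is correct, but it takes a different route from the paper: the paper does not prove the lemma at all, it simply invokes Corollary 2.3 of the cited thesis \cite{coso2} for \eqref{lemma1} and asserts that \eqref{lemma2} follows similarly, whereas you give a self-contained proof. Your reduction to the strong law $S_N/\mu_N \xrightarrow{w.p.1} 1$ is sound: the Chebyshev bound along $N_k=k^2$ is summable thanks to $\mu_{N_k}\ge b_1 k^2$, Borel--Cantelli gives almost-sure convergence along the subsequence, and the monotonicity of $S_N$ (this is exactly where $X_i\ge 0$ is used) together with $\mu_{N_{k+1}}/\mu_{N_k}\to 1$ squeezes the full sequence; the alternative via Kolmogorov's SLLN with $\sum_i \mathrm{Var}(X_i)/i^2<\infty$ and $N/\mu_N\le 1/b_1$ is equally valid and shorter. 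The passage to the logarithmic statements is also handled properly, including the point that makes \eqref{lemma1} and \eqref{lemma2} differ only by the vanishing term $\mu_N^{-1}$, and the implicit requirement $S_N>0$ eventually is guaranteed a.s.\ by $S_N/\mu_N\to 1$ with $\mu_N\to\infty$ (driven by the positive lower bound $b_1$ on the means, which is the role of ``positively bounded''). What your approach buys is transparency and independence from an external, less accessible reference, and it treats both limits in one stroke; what the paper's citation buys is brevity. The only caveat is cosmetic: you only need an upper bound on the variances and a two-sided positive bound on the means, which matches how the lemma is actually applied in the paper.
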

\begin{proof}
By the Corollary 2.3 in \cite{coso2}, we have (\ref{lemma1}); and we could obtain (\ref{lemma2}) similarly.
\end{proof}

Using this lemma and the classic BC cut-set bound \cite{cover}, we obtain the following capacity upper bound.

\begin{Theorem} \label{upper_bound}
(BC cut-set bound) The capacity upper bound for the two-hop large Gaussian relay network is given as
\begin{align}
C_{\text{upper}} &  \leq  \frac{1}{2} \log \left( 1 + \frac{P_s}{N_0} \sum_{i=1}^N |h_i|^2 \right) \label{UPPER_pre} \\
  &  \xrightarrow{w.p.1} \frac{1}{2} \log \left( 1 + \frac{P_s}{N_0} \sum_{i=1}^N \mathbb{E} \left( |h_i|^2 \right) \right) \label{UPPER_sim} \\
 &  \sim \mathcal{O} \left( \log(N) \right) \label{upper_scaling}
\end{align}
\end{Theorem}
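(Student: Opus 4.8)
The plan is to establish the three parts of Theorem~\ref{upper_bound} in sequence, with the first being a standard information-theoretic argument and the latter two being direct applications of Lemma~\ref{sim_lemma1} together with elementary asymptotics. First I would justify the inequality~(\ref{UPPER_pre}). Since the relays operate in a half-duplex two-hop fashion with equal time allocation, any achievable scheme must have the source message decodable from the collection of relay observations $\{y_i\}_{i=1}^N$ in the first hop. Applying the cut-set bound across the broadcast cut separating the source from all $N$ relays (and the destination), and noting that the channel from $x$ to $(y_1,\dots,y_N)$ is a Gaussian SIMO channel with channel vector $(h_1,\dots,h_N)$, the mutual information $I(x;y_1,\dots,y_N)$ is maximized by a Gaussian input and equals $\log\left(1 + \frac{P_s}{N_0}\sum_{i=1}^N |h_i|^2\right)$; the factor $\frac{1}{2}$ comes from the half-duplex time-sharing. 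This gives~(\ref{UPPER_pre}).

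Next, for the almost-sure convergence~(\ref{UPPER_sim}), I would invoke Lemma~\ref{sim_lemma1}, specifically~(\ref{lemma1}), with $X_i = \frac{P_s}{N_0}|h_i|^2$. The hypotheses are met: the $h_i$ are independent, so the $X_i$ are independent and nonnegative, and by assumption $\mathbb{E}(|h_i|^2) \in [b_1,b_2]$ and $\mathbb{E}(|h_i|^4) \in [c_1,c_2]$, so the means and variances of the $X_i$ are uniformly and positively bounded. Hence $\log\left(1+\frac{P_s}{N_0}\sum_{i=1}^N |h_i|^2\right) - \log\left(1+\frac{P_s}{N_0}\sum_{i=1}^N \mathbb{E}(|h_i|^2)\right) \xrightarrow{w.p.1} 0$, which is exactly~(\ref{UPPER_sim}).

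Finally, for the scaling~(\ref{upper_scaling}), I would observe that $\sum_{i=1}^N \mathbb{E}(|h_i|^2)$ is sandwiched between $N b_1$ and $N b_2$, so $\log\left(1+\frac{P_s}{N_0}\sum_{i=1}^N \mathbb{E}(|h_i|^2)\right)$ is sandwiched between $\log(1 + \frac{P_s}{N_0} N b_1)$ and $\log(1+\frac{P_s}{N_0} N b_2)$; both of these are $\log N + \mathcal{O}(1)$, so the half of the quantity grows like $\frac{1}{2}\log N$, i.e., $\mathcal{O}(\log N)$ in the paper's notation. I do not expect any real obstacle here; the only point requiring care is the bookkeeping of exactly which cut and which time-allocation factor produce~(\ref{UPPER_pre}), and making sure the a.s.\ statement is chained correctly with the deterministic scaling bound (the scaling statement is about the deterministic sequence $\frac{1}{2}\log(1+\frac{P_s}{N_0}\sum_i \mathbb{E}|h_i|^2)$, and~(\ref{UPPER_sim}) transfers it to the random sequence).
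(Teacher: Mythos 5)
Your proposal is correct and follows essentially the same route as the paper: the paper obtains (\ref{UPPER_pre}) by citing the analogous cut-set argument in \cite{blocskei} (which you simply spell out as the SIMO broadcast cut with Gaussian input and the half-duplex factor $\tfrac{1}{2}$), obtains (\ref{UPPER_sim}) from (\ref{lemma1}) exactly as you do, and gets (\ref{upper_scaling}) by noting that $\mu = \frac{1}{N}\sum_{i=1}^N \mathbb{E}\left(|h_i|^2\right)$ is positively bounded, which is your sandwich between $N b_1$ and $N b_2$. No gaps.
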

\begin{proof}
(\ref{UPPER_pre}) is by the similar result in \cite{blocskei}, and (\ref{UPPER_sim}) is by (\ref{lemma1}). Let $\mu = \frac{1}{N} \sum_{i=1}^N \mathbb{E} \left( |h_i|^2 \right)$, which is positively bounded, and we obtain (\ref{upper_scaling}).
\end{proof}

\subsection{The DF Achievable Rate}

In \cite{coso}, the authors showed that the DF rate scales at most on the order of $\mathcal{O} \left( \log(\log(N))\right)$ without conferencing among the relays, where the source chooses an optimal a subset of relays to decode the source message and let the rest keep silent in the second hop transmission. In this subsection, we adopt a different scheme to require all the relays to decode the source message and transmit in the second hop. Obviously, compared to the previous scheme \cite{coso}, our scheme is not optimal in term of relay subset selection, while it is enough to show the improvement of the achievable rate scaling behavior introduced by relay conferencing. Note that both the schemes in \cite{coso} and our proposed DF scheme require full channel CSI at the source node. Our main results for the DF relaying scheme is given as the following theorem.

\begin{Theorem}
Using the $p$-portion conferencing strategy, the DF rate scales on the order of $\mathcal{O} \left(  \log (N) \right)$.
\end{Theorem}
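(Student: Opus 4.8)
The plan is to establish the two matching bounds. The upper direction is immediate: any DF achievable rate is at most the capacity, which by Theorem~\ref{upper_bound} is at most $\frac{1}{2}\log\left(1+\frac{P_s}{N_0}\sum_{i=1}^N|h_i|^2\right)\xrightarrow{w.p.1}\frac{1}{2}\log\left(1+\frac{P_s}{N_0}N\mu\right)\sim\mathcal{O}(\log N)$, so the DF rate cannot grow faster than $\log N$; it thus remains to exhibit a DF scheme whose rate grows at least like $\log N$.

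For the achievability I would use the following scheme. The source transmits a single Gaussian codeword of power $P_s$. In the first hop, relay $j$ has its own observation $y_j$ of (\ref{t1}) together with the $M$ conferenced copies $\{y_{j-k,\,j}\}_{k=1}^{M}$ of (\ref{tc}), and by the local-CSI assumption it knows every gain $h_{j-k}$ and $f_{j-k,\,j}$, $k=0,\dots,M$, appearing in them. Conditioned on $x$ these $M+1$ scalars are independent Gaussian-noise observations of $x$, so maximal-ratio combining yields a sufficient statistic whose SNR $\gamma_j$ is the sum of the individual SNRs, and relay $j$ can decode $x$ whenever $R<\frac{1}{2}\log(1+\gamma_j)$. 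I let all $N$ relays decode and re-encode $x$ with a common Gaussian codebook, and in the second hop relay $i$ sends the re-encoded symbol scaled by $g_i^{*}/|g_i|$ (feasible since it knows $g_i$), so the destination sees SNR $\frac{P_r}{N_0}\left(\sum_{i=1}^N|g_i|\right)^2$. The achievable rate is then $R_{\mathrm{DF}}=\frac{1}{2}\min\left\{\min_{1\le j\le N}\log(1+\gamma_j),\ \log\left(1+\frac{P_r}{N_0}(\sum_{i=1}^N|g_i|)^2\right)\right\}$, and it suffices to show both inner terms are $\Theta(\log N)$ w.p.\ $1$.

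The per-observation SNR bound is routine: the fixed coefficient in (\ref{tc}), together with $P_c>0$ and the uniform positive bounds on $f_{i,i+k}$ and $\mathbb{E}(|h_i|^2)$, makes the post-combining contribution of each conferenced copy at least a fixed fraction $\beta_{\min}>0$ of $\frac{P_s}{N_0}|h_{j-k}|^2$, whence $\gamma_j\ge\frac{P_s\beta_{\min}}{N_0}\sum_{k=0}^{M}|h_{j-k}|^2$. For the second hop, $\mathbb{E}(|g_i|)$ is bounded away from $0$ (by H\"older, $(\mathbb{E}|g_i|)^{2/3}\ge\mathbb{E}(|g_i|^2)/(\mathbb{E}|g_i|^4)^{1/3}$, which is bounded below by a positive constant), so Lemma~\ref{sim_lemma1} applied to $\sum_i|g_i|$ gives $\log\sum_i|g_i|\sim\log\sum_i\mathbb{E}|g_i|$ w.p.\ $1$, and since $\sum_i\mathbb{E}|g_i|=\Theta(N)$ the second-hop term is $\sim\mathcal{O}(\log N)$. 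What remains is to show $\min_{1\le j\le N}\sum_{k=0}^{M}|h_{j-k}|^2$ also grows like $N$ w.p.\ $1$ (any polynomial-in-$N$ lower bound already gives the $\log N$ order); since $M+1\sim pN$ and $\mathbb{E}(|h_i|^2)\ge b_1$, the natural target is $\min_j\sum_{k=0}^{M}|h_{j-k}|^2\gtrsim\frac{pb_1}{2}N$.

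This last point is, I expect, the only real obstacle. Lemma~\ref{sim_lemma1} controls one window $\sum_{k=0}^{M}|h_{j-k}|^2$ at a time, whereas here the estimate must hold simultaneously for all $N$ of the overlapping cyclic windows, and a crude Chebyshev-plus-union-bound is too weak when only the first two moments of the $|h_i|^2$ are available (a union bound over $N$ windows exactly offsets the $O(1/N)$ Chebyshev tail). I would instead write $\sum_{k=0}^{M}(|h_{j-k}|^2-\mathbb{E}(|h_{j-k}|^2))$ as a difference of two partial sums of the centered variables $|h_i|^2-\mathbb{E}(|h_i|^2)$, and invoke the strong law for those variables (valid since their variances are uniformly bounded) in its uniform form $\frac{1}{N}\max_{1\le j\le N}\left|\sum_{i=1}^{j}\left(|h_i|^2-\mathbb{E}(|h_i|^2)\right)\right|\xrightarrow{w.p.1}0$; this yields $\max_j\left|\frac{1}{M+1}\sum_{k=0}^{M}|h_{j-k}|^2-\frac{1}{M+1}\sum_{k=0}^{M}\mathbb{E}(|h_{j-k}|^2)\right|\to0$, hence a uniform linear-in-$N$ lower bound on $\min_j\gamma_j$. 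Combining the two estimates, $R_{\mathrm{DF}}=\frac{1}{2}\min\{\Theta(\log N),\Theta(\log N)\}\sim\mathcal{O}(\log N)$, which with the upper direction completes the argument; in fact the first-hop term is $\sim\frac{1}{2}\log N$, so this DF scheme even attains the leading term of the cut-set bound of Theorem~\ref{upper_bound}.
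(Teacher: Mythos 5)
Your proposal is correct and follows essentially the same scheme as the paper: MRC of the $M+1$ observations at each relay in the first hop, all relays decoding and re-encoding, coherent phase-aligned relay transmission in the second hop, and the DF rate taken as the minimum of the two hops, each shown to be $\Theta(\log N)$. The differences are modest but worth recording. First, your second-hop weights $g_i^*/|g_i|$ (equal-gain, phase-only) differ from the paper's choice $\sqrt{P_r/\mathbb{E}\left(|g_i|^2\right)}\,g_i^*$; both yield a destination SNR growing like $N^2$, but yours needs the extra H\"older step to bound $\mathbb{E}|g_i|$ away from zero, whereas the paper's normalization lets it invoke Lemma \ref{sim_lemma1} directly on $\sum_i |g_i|^2/\sqrt{\mathbb{E}\left(|g_i|^2\right)}$. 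Second, you state the converse direction explicitly via Theorem \ref{upper_bound}, which the paper leaves implicit. Third, and most substantively, your treatment of $\min_i R_i$ is more careful than the paper's: the paper applies Lemma \ref{sim_lemma1} to each relay's rate individually and then takes the minimum over all $N$ relays without addressing uniformity of the almost-sure convergence over the $N$ overlapping (and $N$-dependent) windows; your maximal-partial-sum strong-law argument closes exactly this gap, and your observation that a plain Chebyshev-plus-union bound is too weak with only second moments is accurate. The only cosmetic caveat is that a cyclically wrapped window is a combination of up to three partial sums rather than two, which does not affect the bound.
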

\begin{proof}
Based on the principle of maximum ratio combining (MRC), the received SNR in the relay is the sum of the SNRs in (\ref{t1}) and (\ref{tc}). Thus, for the first hop, the maximum rate supported at the $i$-th relay is given as
\begin{align}
R_i & = \frac{1}{2} \log \left( 1 + \frac{|h_i|^2 P_s}{N_0} + \frac{P_s}{N_0} \sum_{k=1}^M \frac{ \frac{ P_c}{P_s \mathbb{E} \left( |h_{i-k}|^2 \right) + N_0 } |f_{i-k,i}|^2 |h_{i-k}|^2  } {  \frac{ P_c}{P_s \mathbb{E} \left( |h_{i-k}|^2 \right) + N_0 } |f_{i-k,i}|^2 + 1  } \right) \\
& \xrightarrow{w.p.1} \frac{1}{2} \log \left( 1+  \frac{P_s}{N_0} \left( \mathbb{E} \left( |h_{i}|^2 \right) +  \sum_{k=1}^M  \frac{  P_c |f_{i-k,i}|^2 \mathbb{E} \left( |h_{i-k}|^2 \right) } { P_c |f_{i-k,i}|^2 + P_s \mathbb{E} \left( |h_{i-k}|^2 \right) + N_0  }   \right)  \right) \label{DF_rate_asym} \\
& = \frac{1}{2} \log \left( 1 + (M+1) \frac{P_s}{N_0} \mu_{\text{DF}} \right),
\end{align}
where (\ref{DF_rate_asym}) is by the Lemma \ref{sim_lemma1}, and $\mu_{\text{DF}} = \frac{1}{M+1}\left[ \mathbb{E} \left( |h_{i}|^2 \right) +  \sum_{k=1}^M  \frac{  P_c |f_{i-k,i}|^2 \mathbb{E} \left( |h_{i-k}|^2 \right) } { P_c |f_{i-k,i}|^2 + P_s \mathbb{E} \left( |h_{i-k}|^2 \right) + N_0  } \right]$, which is positively bounded. Thus, we have $R_i \sim \mathcal{O} \left( \log (N) \right)$.

In the second hop, we assume that all relays transmit simultaneously, and the transmit signal in the $i$-th relay is $x_i = \sqrt{ \frac{P_r}{\mathbb{E} \left( |g_i|^2 \right)}  } g_i^* x $. Thus, the received signal at the destination is given as
\begin{align}
y = \underbrace{ \sum_{i=1}^N \sqrt{\frac{P_r}{\mathbb{E} \left( |g_i|^2 \right)}} |g_i|^2}_{Q_0} x + n,
\end{align}
and the maximum rate supported in the second hop is given as
\begin{align}
R_{\text{MAC}} & = \frac{1}{2} \log \left( 1 +  \frac{Q_0^2}{N_0} \right) \\
& \sim \log \left(  \frac{Q_0}{\sqrt{N_0}} \right) \label{DF_MAC_sym2} \\
& \xrightarrow{w.p.1} \log \left(  \frac{\mathbb{E} \left( Q_0 \right) }{\sqrt{N_0}} \right) \label{DF_MAC_sym}\\
& = \frac{1}{2} \log \left( \frac{ P_r}{N_0} N^2 \mu^2 \right),
\end{align}
where (\ref{DF_MAC_sym2}) is valid as $N \rightarrow \infty$, (\ref{DF_MAC_sym}) is by (\ref{lemma2}), $ \mathbb{E} \left( Q_0 \right) = \sqrt{P_r} \sum_{i=1}^N \frac{\mathbb{E} \left( |g_i|^2 \right)}{\sqrt{\mathbb{E} \left( |g_i|^2 \right) }} = \sqrt{P_r} \sum_{i=1}^N \sqrt{ \mathbb{E} \left( |g_i|^2 \right) }$, and $\mu = \frac{1}{N} \sum_{i=1}^N \sqrt{ \mathbb{E} \left( |g_i|^2 \right) }$.

Therefore, the DF achievable rate is given as
\begin{align}
R_{\text{DF}} & = \min \left\{ \min_{i} \{R_i\} , R_{\text{MAC}}  \right\}.
\end{align}
Since $R_i$ and $ R_{\text{MAC}}$ scales as $\mathcal{O} \left( \log \left( N \right) \right)$ and $\mathcal{O} \left( \log \left( N^2 \right) \right)$, respectively, $R_{\text{DF}}$ scales with the order of $\mathcal{O} \left( \log \left( N \right) \right)$.
\end{proof}

\begin{Remark}
For the complete conferencing scheme, i.e., $M=N-1$, the DF scheme is not capacity-achieving, since the SNR penalty term $\frac{  P_c |f_{i-k,i}|^2 } { P_c |f_{i-k,i}|^2 + P_s \mathbb{E} \left( |h_{i-k}|^2 \right) + N_0  } $ is uniformly and positively bounded and strictly less than 1. For the case $0<p<1$, obviously, the DF scheme is also not capacity-achieving, and suffers another $(1-p)$-portion power gain loss.
\end{Remark}

\subsection{AF Achievable Rate}

In this subsection, we discuss the AF relaying scheme. Since we assume no global CSIs at the relays, the network-wide optimal combining at the relays as proposed in \cite{chuan2} cannot be deployed. Thus, with only local CSIs, MRC across conferencing signals is another good choice, which maximizes the received SNR at the relays. Unfortunately, MRC makes the rate expression too complicated to obtain any clean results. Instead, here we combine the received signals $y_i$ and $y_{i-k,i}$'s at the $i$-th relay as
\begin{align}
t_i =  h_i^* y_i + \sum_{k=1}^M  \sqrt{ \frac{ P_s \mathbb{E} \left( |h_{i-k}|^2 \right) +N_0}{ P_c } }  \frac{1}{f_{i-k,i}} h_{i-k}^* y_{i-k,i}.
\end{align}
Then, the transmit signal at the $i$-th relay is given as
\begin{align}
x_i & = a_i \sqrt{P_r} g_i^* t_i \\
 & = a_i \sqrt{P_r} g_i^* \left(  \sum_{k=0}^M  \sqrt{P_s} |h_{i-k}|^2 x   +  \sum_{k=0}^M h_{i-k}^* n_{i-k} + \sum_{k=1}^M \sqrt{ \frac{ P_s \mathbb{E} \left( |h_{i-k}|^2 \right) +N_0}{ P_c } }  \frac{h_{i-k}^* n_{i-k,i}}{f_{i-k,i}}  \right), \label{AF_relay_signal}
\end{align}
where $a_i$ is the power control factor to satisfy $\mathbb{E} (x_i) \leq  P_r$, and it is chosen as
\begin{align} \label{AF_power_constraint}
a_i^2 = \mathbb{E}^{-1} \left(|g_i|^{2} \right) \left[  P_s \mathbb{E} \left( \sum_{k=0}^M |h_{i-k}|^2 \right)^2  + \sum_{k=0}^M \mathbb{E} \left( |h_{i-k}|^2 \right) + \sum_{k=1}^M \frac{P_s \mathbb{E}\left( |h_{i-k}|^2\right) + N_0}{P_c |f_{i-k,i}|^2 } \mathbb{E} \left( |h_{i-k}|^2 \right) \right]^{-1}.
\end{align}

\begin{Remark}
This combining scheme is not valid for the case without relay conferencing, i.e., the conferencing link SNR $\frac{P_c}{N_0} = 0$. Moreover, if $|f_{i,i+k}|$ or $\frac{P_c}{N_0}$ is close to zero, it will boost the conferencing link noise $n_{i,i+k}$, which may make the performance even worse than the case without conferencing. However, our analysis will show that for uniformly and positively bounded $|f_{i,i+k}|$'s and arbitrary $\frac{P_c}{N_0}$, the AF scheme performs well as $N \rightarrow \infty$.
\end{Remark}

Based on (\ref{receive_signal_destination}) and (\ref{AF_relay_signal}), the received signal at the destination is given as
\begin{align}
y & = \sum_{i=1}^N g_i x_i + n \\
& =  \sqrt{P_r P_s} \underbrace{ \sum_{i=1}^N a_i |g_i|^2 \left( \sum_{k=0}^M  |h_{i-k}|^2 \right) }_{Q_1} x + \sqrt{P_r} \sum_{i=1}^N \left( \sum_{k=0}^M a_{i+k} |g_{i+k}|^2   \right) h_{i}^* n_i \nonumber \\
&~~~~~ + \sqrt{P_r} \sum_{i=1}^N \sum_{k=1}^M \sqrt{ \frac{ P_s \mathbb{E} \left( |h_{i-k}|^2 \right) +N_0}{ P_c } }  \frac{1}{f_{i-k,i}}  a_i  |g_i|^2 h_{i+k}^* n_{i,i+k} + n.
\end{align}
Then, the AF achievable rate is given as
\begin{align}
R_{\text{AF}} = \frac{1}{2} \log \left( 1 + \frac{ P_s P_r Q_1^2}{ \left( P_r Q_2 + P_r Q_3 +1 \right) N_0  } \right),
\end{align}
where
\begin{align}
Q_2 & = \sum_{i=1}^N \left(  \sum_{k=0}^M a_{i+k} |g_{i+k}|^2 \right)^2 | h_{i}|^2, \\
Q_3 & = \sum_{i=1}^N \sum_{k=1}^M |a_i|^2   \frac{ P_s \mathbb{E} \left( |h_{i-k}|^2 \right) +N_0}{ P_c |f_{i-k,i}|^2 } |g_i|^4  |h_{i-k}|^2.
\end{align}

Now we have
\begin{align}
&~~~~ \log \left( 1 + \frac{ P_s P_r Q_1^2}{ \left( P_r Q_2 + P_r Q_3 +1 \right) N_0  } \right) \\ &  \sim  \log \left(  \frac{ P_s P_r Q_1^2}{ \left( P_r Q_2 + P_r Q_3 +1 \right) N_0  } \right) \label{AF_p_appro} \\
& = 2 \log \left( \sqrt{\frac{P_s P_r}{ N_0} } Q_1 \right) - \log \left( P_r Q_2 + P_r Q_3 +1 \right)  \\
 & \xrightarrow{w.p.1} 2 \log \left( \sqrt{\frac{P_s P_r}{ N_0} } \mathbb{E} \left( Q_1 \right) \right) - \log \left( P_r \mathbb{E} \left( Q_2 \right) + P_r \mathbb{E} \left( Q_3 \right) +1 \right), \label{AF_p_symp} \\
& \sim  \log \left(  1 + \frac{ P_s P_r \mathbb{E}^2 \left( Q_1 \right) }{ \left( P_r  \mathbb{E} \left( Q_2 \right) + P_r \mathbb{E} \left( Q_3 \right) +1 \right) N_0  } \right), \label{AF_p_appro2}
\end{align}
where (\ref{AF_p_symp}) is by the Lemma \ref{sim_lemma1}. Notice that (\ref{AF_p_appro}) and (\ref{AF_p_appro2}) are valid since we only add or ignore a constant term, which can be neglected in the case of $N \rightarrow + \infty$.

As $N \rightarrow + \infty$, we have
\begin{align}
\mathbb{E} \left( Q_1 \right) & = \sum_{i=1}^N a_i \mathbb{E} \left( |g_i|^2 \right) \left(  \sum_{k=0}^M  \mathbb{E} \left( |h_{i-k}|^2 \right) \right) =  N (M+1) \mu_1, \\
\mathbb{E} \left( Q_2 \right) & = \sum_{i=1}^N \mathbb{E} \left( \left( \sum_{k=0}^M a_{i+k}  |g_{i+k}|^2 \right)^2 \right) \mathbb{E} \left(| h_{i}|^2 \right) = N(M+1)^2 \mu_2, \\
\mathbb{E} \left( Q_3 \right) & = \sum_{i=1}^N \sum_{k=1}^M |a_i|^2   \frac{ P_s \mathbb{E} \left( |h_{i-k}|^2 \right) +N_0}{ P_c |f_{i-k,i}|^2 } \mathbb{E} \left( |g_i|^4 \right) \mathbb{E} \left(  |h_{i+k}|^2 \right) = N M \mu_3,
\end{align}
where
\begin{align}
\mu_1 & = \frac{1}{N} \sum_{i=1}^N a_i \mathbb{E} \left( |g_i|^2 \right) \left(  \frac{1}{M+1} \sum_{k=0}^M  \mathbb{E} \left( |h_{i-k}|^2 \right) \right), \\
\mu_2 & = \frac{1}{N} \sum_{i=1}^N \mathbb{E} \left( \left( \frac{1}{M+1} \sum_{k=0}^M a_{i+k}  |g_{i+k}|^2 \right)^2 \right) \mathbb{E} \left(| h_{i}|^2 \right), \\
\mu_3 &  = \frac{1}{N} \sum_{i=1}^N \frac{1}{M} \sum_{k=1}^M |a_i|^2   \frac{ P_s \mathbb{E} \left( |h_{i-k}|^2 \right) +N_0}{ P_c |f_{i-k,i}|^2 } \mathbb{E} \left( |g_i|^4 \right) \mathbb{E} \left(  |h_{i+k}|^2 \right).
\end{align}
Since we assume that $\mathbb{E} \left( |h_i|^2 \right)$, $\mathbb{E} \left( |g_i|^2 \right)$, and $\mathbb{E} \left( |g_i|^4 \right)$ are uniformly and positively bounded, $|a_i|$, $\mu_1$, $\mu_2$, and $\mu_3$ are also bounded and positive. For the $p$-portion conferencing scheme, since $\mathbb{E} \left( Q_3 \right)$ scales on a smaller order than $\mathbb{E} \left( Q_2 \right)$ as $N$ goes to infinity, we obtain the AF rate as
\begin{align}
R_{\text{AF}} \xrightarrow{w.p.1}  \frac{1}{2} \log \left( 1 + N \frac{\mu_1^2 }{\mu_2} \frac{P_s}{N_0} \right).
\end{align}

\begin{Remark}
The term $Q_3$ is the contribution of the conferencing link noises. Since $\frac{\mathbb{E} \left(Q_3\right)}{\mathbb{E} \left(Q_2 \right)} \rightarrow 0$, we conclude that for the $p$-portion conferencing scheme, the conferencing link noises are asymptotically negligible as $N \rightarrow +\infty$. This suggests that for large relay networks with AF, we do not need high quality conferencing links, i.e., even with small $\frac{P_c}{N_0}$, and the performance of the AF scheme is reasonably good for large $N$.
\end{Remark}

It is difficult to verify whether the AF scheme is capacity-achieving or not for the case with $0<p<1$ and generally distributed $h_i$'s and $g_i$'s. In the following, we prove two special capacity-achieving cases, which may be applied to many widely-used scenarios.

\begin{Theorem}
If $h_i$'s and $g_i$'s are i.i.d., respectively, the AF scheme asymptotically achieves the capacity upper bound (\ref{UPPER_sim}) as $N$ goes to infinity for arbitrary $0 < p <1$ and $\frac{P_c}{N_0}>0$.
\end{Theorem}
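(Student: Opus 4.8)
The plan is to piggy-back on the general AF analysis carried out above. For the $p$-portion scheme with $0<p<1$ we have already established that $R_{\text{AF}}\xrightarrow{w.p.1}\frac12\log\!\big(1+N\frac{\mu_1^2}{\mu_2}\frac{P_s}{N_0}\big)$, and in the i.i.d.\ case the cut-set bound (\ref{UPPER_sim}) collapses to $C_{\text{upper}}\xrightarrow{w.p.1}\frac12\log\!\big(1+N\,\mathbb{E}(|h|^2)\frac{P_s}{N_0}\big)$ since $\sum_{i=1}^N\mathbb{E}(|h_i|^2)=N\,\mathbb{E}(|h|^2)$. Both quantities grow like $\frac12\log N$ plus a bounded term, so their difference tends to $\frac12\log\frac{\mu_1^2}{\mu_2\,\mathbb{E}(|h|^2)}$; hence, in the notation of the introduction, it suffices to prove $\lim_{N\to\infty}\frac{\mu_1^2}{\mu_2}=\mathbb{E}(|h|^2)$, which is exactly $R_{\text{AF}}\sim C_{\text{upper}}$.

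Write $\beta=\mathbb{E}(|h|^2)$ and $\gamma=\mathbb{E}(|g|^2)$. The first step is to pin down the normalization factor $a_i$ of (\ref{AF_power_constraint}). In the i.i.d.\ case $\mathbb{E}\!\big((\sum_{k=0}^M|h_{i-k}|^2)^2\big)=(M+1)\mathbb{E}(|h|^4)+(M+1)M\beta^2$, so as $M\to\infty$ (which holds because $p>0$) the bracket in (\ref{AF_power_constraint}) is dominated by the term $P_sM^2\beta^2$, the remaining two sums -- including the contribution of the conferencing gains $f_{i-k,i}$, which are bounded away from $0$ and $\infty$ -- being only $O(M)$; hence $M^2a_i^2\to\frac{1}{\gamma P_s\beta^2}$ uniformly in $i$. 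Plugging this into the definitions of $\mu_1$ and $\mu_2$: since $\frac1{M+1}\sum_{k=0}^M\mathbb{E}(|h_{i-k}|^2)=\beta$, we get $M\mu_1\to\sqrt{\gamma/P_s}$; and on expanding the inner square in $\mu_2$ and using independence of the $|g_{i+k}|^2$, the off-diagonal terms contribute at order $M^{-2}$, giving $M^2\mu_2\to\gamma/(P_s\beta)$, while the diagonal ``self-noise'' terms are only $O(M^{-3})$ -- this is where $\mathbb{E}(|g|^4)<\infty$ enters. Therefore $\frac{\mu_1^2}{\mu_2}=\frac{(M\mu_1)^2}{M^2\mu_2}\to\frac{\gamma/P_s}{\gamma/(P_s\beta)}=\beta=\mathbb{E}(|h|^2)$, which completes the argument.

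The one genuinely delicate point is the order-of-growth bookkeeping in the second step: one must verify that the $a_i$'s are asymptotically uniform -- in particular independent of $i$ and insensitive to the $f_{i-k,i}$'s, because they are fixed by the dominant $O(M^2)$ part of the normalization -- so that they cancel from the ratio $\mu_1^2/\mu_2$, and that the diagonal part of $\mu_2$ decays strictly faster than the off-diagonal part. Both facts rest only on the two structural hypotheses already in force, namely $p>0$ (forcing $M\to\infty$) and the finite fourth moments of $h_i$ and $g_i$; granted these, the limit falls out. Everything else -- the negligibility of $Q_3$ relative to $Q_2$, the replacement $\log(1+x)\sim\log x$ as $N\to\infty$, and the almost-sure convergences via Lemma \ref{sim_lemma1} -- is inherited verbatim from the general derivation preceding the theorem.
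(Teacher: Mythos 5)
Your proposal is correct and follows essentially the same route as the paper: reduce the claim to showing $\mu_1^2/\mu_2\to\mathbb{E}(|h|^2)$, use the large-$M$ asymptotics of the normalization $a_i$ from (\ref{AF_power_constraint}) (valid since $p>0$ forces $M\to\infty$), and note that the cross terms dominate the diagonal terms in $\mu_2$. The only cosmetic difference is that you evaluate the limits of $M\mu_1$ and $M^2\mu_2$ separately (with the correct constant $M^2a_i^2\to 1/(\gamma P_s\beta^2)$), whereas the paper forms the ratio $C_i$ in which the exact constant in $a_i$ cancels and shows $C_i\to 1$.
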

\begin{proof}
Since $h_i$'s and $g_i$'s are i.i.d., $\mathbb{E} \left( |h_i|^2 \right)$, $\mathbb{E} \left( |g_i|^2 \right)$, and $\mathbb{E} \left( |g_i|^4 \right)$ are identical over different $i$'s, respectively. Let us exam the term $\frac{\mu_1^2}{\mu_2}$, and we have
\begin{align}
\frac{\mu_1^2}{\mu_2} & = \frac{1}{N} \sum_{i=1}^N \mathbb{E} \left( |h_i|^2 \right) \frac{ \sum_{j=1}^N \mathbb{E} \left( |h_j|^2 \right) \left( \sum_{k=0}^M a_{i+k} \mathbb{E} \left( |g_{i+k}|^2 \right) \right) \left( \sum_{t=0}^M a_{j+t} \mathbb{E} \left( |g_{j+t}|^2 \right) \right) }{ \sum_{j=1}^N \mathbb{E} \left(| h_{j}|^2 \right) \mathbb{E} \left( \left(  \sum_{s=0}^M a_{j+s}  |g_{j+s}|^2 \right)^2 \right)  }  \\
& = \frac{\mathbb{E} \left( |h_i|^2 \right) }{N} \sum_{i=1}^N \underbrace{ \frac{ N \mathbb{E}^2 \left( |g_i|^2 \right) \left[ \left( \sum_{k=0}^M a_{i+k} \right)  \left(  \sum_{t=0}^M a_{t} \right) \right] } {   \mathbb{E}^2 \left( |g_i|^2 \right) \sum_{j=1}^N \sum_{s_1 \neq s_2} a_{j+s_1}a_{j+s_2}  +   \mathbb{E} \left( |g_i|^4 \right) M \sum_{j=1}^N  a_{j}^2  } }_{C_i}.
\end{align}
From (\ref{AF_power_constraint}), we have $a_i^2 \approx \frac{1}{ \mathbb{E} \left( |h_i|^2 \right) \mathbb{E} \left( |g_i|^2 \right) P_s M^2  }$ for large $M$, and we have
\begin{align}
C_i  \approx \frac{\mathbb{E}^2 \left( |g_i|^2 \right) N (M+1)^2 }{  \mathbb{E}^2 \left( |g_i|^2 \right) N M(M+1)  +   \mathbb{E} \left( |g_i|^4 \right) M N   }  \rightarrow 1.
\end{align}
Hence, we have $\frac{\mu_1^2}{\mu_2} \rightarrow \mathbb{E} \left( |h_i|^2 \right)$. Therefore, the theorem is proved.
\end{proof}

\begin{Theorem}
For independent but not necessarily identically distributed $h_i$'s or $g_i$'s, the full conferencing scheme, i.e., $N=M+1$, asymptotically achieves the capacity upper bound as $N$ goes to infinity for arbitrary $\frac{P_c}{N_0} >0 $.
\end{Theorem}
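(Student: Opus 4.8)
The plan is to specialize the AF analysis of the previous subsection to $M=N-1$ and to show that the asymptotic AF rate $R_{\text{AF}}\xrightarrow{w.p.1}\frac12\log\left(1+N\frac{\mu_1^2}{\mu_2}\frac{P_s}{N_0}\right)$ coincides, up to a vanishing additive error, with the upper bound (\ref{UPPER_sim}). Since $\mathbb{E}(Q_1)=N(M+1)\mu_1$ and $\mathbb{E}(Q_2)=N(M+1)^2\mu_2$, one has $N\frac{\mu_1^2}{\mu_2}=\frac{\mathbb{E}^2(Q_1)}{\mathbb{E}(Q_2)}$, so it suffices to prove $\frac{\mathbb{E}^2(Q_1)}{\mathbb{E}(Q_2)}=N\bar\mu_h(1+o(1))$, where $\bar\mu_h=\frac1N\sum_{i=1}^N\mathbb{E}(|h_i|^2)$; the displayed formula for $R_{\text{AF}}$ does apply in the present regime because, as the computation below shows, $\mathbb{E}(Q_3)=\Theta(1)$ is of strictly smaller order than $\mathbb{E}(Q_2)=\Theta(N)$.

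First, I would exploit the structural simplification of full conferencing: when $M=N-1$ each relay receives (noisy copies of) all $N$ source observations, so that, setting $H=\sum_{j=1}^N|h_j|^2$ and $G=\sum_{j=1}^N a_j|g_j|^2$, one has $\sum_{k=0}^{M}|h_{i-k}|^2=H$ and $\sum_{k=0}^{M}a_{i+k}|g_{i+k}|^2=G$ for every $i$; consequently $Q_1=HG$ and $Q_2=G^2H$, with $H$ and $G$ independent. Next, I would pin down the power-control factor $a_i$ in (\ref{AF_power_constraint}): its first bracketed term is $P_s\mathbb{E}(H^2)=P_s\big((\mathbb{E} H)^2+\mathrm{Var}(H)\big)$, and since $\mathrm{Var}(H)=\sum_{j=1}^N\mathrm{Var}(|h_j|^2)=\mathcal{O}(N)$ while $(\mathbb{E} H)^2=N^2\bar\mu_h^2=\Theta(N^2)$, this term equals $P_sN^2\bar\mu_h^2(1+\mathcal{O}(1/N))$ and dominates the remaining two bracketed terms, which are $\mathbb{E}(H)=\mathcal{O}(N)$ and a sum of $M$ uniformly bounded summands, again $\mathcal{O}(N)$. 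Therefore $a_i=\big(\mathbb{E}(|g_i|^2)P_s\big)^{-1/2}(N\bar\mu_h)^{-1}(1+\mathcal{O}(1/N))$, with the error uniform in $i$.

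Then I would carry out the moment computation. From $\mathbb{E}(Q_1)=\mathbb{E}(H)\,\mathbb{E}(G)$ and $\mathbb{E}(G)=\sum_{j=1}^N a_j\mathbb{E}(|g_j|^2)=\frac{\bar\mu_g}{\sqrt{P_s}\,\bar\mu_h}(1+o(1))$, with $\bar\mu_g=\frac1N\sum_{j=1}^N\sqrt{\mathbb{E}(|g_j|^2)}$, I obtain $\mathbb{E}(Q_1)=N\bar\mu_h\cdot\frac{\bar\mu_g}{\sqrt{P_s}\,\bar\mu_h}(1+o(1))=\frac{N\bar\mu_g}{\sqrt{P_s}}(1+o(1))$. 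From $\mathbb{E}(Q_2)=\mathbb{E}(G^2)\,\mathbb{E}(H)$, using $\mathbb{E}(G^2)=(\mathbb{E} G)^2+\mathrm{Var}(G)$ with $\mathrm{Var}(G)=\sum_{j=1}^N a_j^2\,\mathrm{Var}(|g_j|^2)=\mathcal{O}(1/N)\to 0$, I get $\mathbb{E}(G^2)=\frac{\bar\mu_g^2}{P_s\bar\mu_h^2}(1+o(1))$ and hence $\mathbb{E}(Q_2)=\frac{N\bar\mu_g^2}{P_s\bar\mu_h}(1+o(1))$; likewise $\mathbb{E}(Q_3)$, a sum of $NM=\Theta(N^2)$ summands each of order $a_i^2=\Theta(1/N^2)$, is $\Theta(1)=o(\mathbb{E}(Q_2))$, which justifies the rate formula used above. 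Combining, $\frac{\mathbb{E}^2(Q_1)}{\mathbb{E}(Q_2)}=\frac{(N\bar\mu_g/\sqrt{P_s})^2}{N\bar\mu_g^2/(P_s\bar\mu_h)}(1+o(1))=N\bar\mu_h(1+o(1))$, so $R_{\text{AF}}\xrightarrow{w.p.1}\frac12\log\big(1+N\bar\mu_h\frac{P_s}{N_0}(1+o(1))\big)$; since $\bar\mu_h\geq b_1>0$ the argument diverges, the multiplicative $1+o(1)$ becomes immaterial after taking the logarithm, and $R_{\text{AF}}\sim\frac12\log\big(1+\frac{P_s}{N_0}\sum_{i=1}^N\mathbb{E}(|h_i|^2)\big)$, which is exactly the right-hand side of (\ref{UPPER_sim}). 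Together with Theorem \ref{upper_bound}, this proves the theorem.

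The argument is essentially a moment calculation, so the main obstacle is bookkeeping rather than any single conceptual step: one must verify that every $\mathcal{O}(1/N)$ and $o(1)$ error above is uniform in $i$, so that it survives summation over the $N$ relays, and one must check that the chain of ``$\sim$'' and ``$\xrightarrow{w.p.1}$'' steps (\ref{AF_p_appro})--(\ref{AF_p_appro2}) of the general AF derivation remains valid at $p=1$ — which it does, precisely because $\mathbb{E}(Q_3)$ stays of strictly smaller order than $\mathbb{E}(Q_2)$ in the full-conferencing regime.
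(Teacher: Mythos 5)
Your proposal is correct in substance, but it takes a genuinely different route from the paper. The paper's proof exploits an exact algebraic cancellation: with $N=M+1$ one has $Q_1=\bigl(\sum_{i} a_i|g_i|^2\bigr)\sum_{k}|h_k|^2$ and $Q_2=\bigl(\sum_{i} a_i|g_i|^2\bigr)^2\sum_{k}|h_k|^2$, hence $Q_1^2/Q_2=\sum_{k}|h_k|^2$ identically, with no expectations, no estimate of $a_i$, and no concentration argument for $Q_1,Q_2$ at all; the only thing left to show is $\frac{P_rQ_3+1}{P_rQ_2}\xrightarrow{w.p.1}0$, after which $R_{\text{AF}}$ converges to $\frac12\log\bigl(1+\frac{P_s}{N_0}\sum_k|h_k|^2\bigr)$, i.e., the instantaneous cut-set bound (\ref{UPPER_pre}) itself -- a slightly stronger statement than matching its deterministic limit (\ref{UPPER_sim}), and one that needs no identical-distribution or moment bookkeeping whatsoever. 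You instead specialize the general $p$-portion limit formula and compute $\mathbb{E}^2(Q_1)/\mathbb{E}(Q_2)$ via a uniform expansion of $a_i$ and first/second moments of $G=\sum_j a_j|g_j|^2$ and $H=\sum_j|h_j|^2$; your estimates ($a_i=(\mathbb{E}(|g_i|^2)P_s)^{-1/2}(N\bar\mu_h)^{-1}(1+\mathcal{O}(1/N))$ uniformly in $i$, $\mathrm{Var}(G)=\mathcal{O}(1/N)$, $\mathbb{E}(Q_3)=\mathcal{O}(1)=o(\mathbb{E}(Q_2))$) are correct, and the multiplicative $1+o(1)$ is indeed harmless inside the logarithm since the argument diverges. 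Two caveats: (i) your stated reason why the chain (\ref{AF_p_appro})--(\ref{AF_p_appro2}) survives at $p=1$ is slightly misplaced -- the order comparison $\mathbb{E}(Q_3)=o(\mathbb{E}(Q_2))$ only justifies dropping $Q_3$, not the with-probability-one step (\ref{AF_p_symp}); at $p=1$ that step is most cleanly justified from the very factorization $Q_1=GH$, $Q_2=G^2H$ you already wrote down, applying Lemma \ref{sim_lemma1} to $H$ and to the rescaled sum $NG$ (whose summands have means bounded away from zero) separately, since the summands of $Q_1$ across $i$ are not independent and Lemma \ref{sim_lemma1} cannot be invoked on $Q_1$ directly; (ii) once that factorization is in hand, the cancellation $Q_1^2/Q_2=H$ renders the entire moment calculation unnecessary -- that is the shortcut the paper takes, and it is precisely what makes the non-identically-distributed case go through with no extra effort. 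So your argument buys a verification that the general $p$-portion formula degenerates correctly at $p=1$, at the cost of bookkeeping the paper avoids entirely.
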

\begin{proof}
For the complete conferencing scheme, we obtain
\begin{align}
Q_1  = \left( \sum_{i=1}^N a_i |g_i|^2 \right) \sum_{k=1}^N  |h_{k}|^2 ,~Q_2  =  \left( \sum_{i=1}^N a_{i} |g_{i}|^2 \right)^2 \sum_{k=1}^N  | h_{k}|^2
\end{align}
By a similar argument as in the previous theorem, we can show $\frac{P_r Q_3 + 1}{P_r Q_2} \xrightarrow{w.p.1} 0$ as $N$ goes to infinity such that we obtain
\begin{align}
R_{\text{AF}} & = \frac{1}{2}  \log \left( 1 + \frac{ P_s \sum_{k=1}^N  |h_{k}|^2 }{ \left( 1 + \frac{P_r Q_3 + 1}{P_r Q_2} \right) N_0  } \right) \\
& \xrightarrow{w.p.1} \frac{1}{2}  \log \left( 1 + \frac{ P_s  }{ N_0  } \sum_{k=1}^N  |h_{k}|^2 \right).
\end{align}
Therefore, the capacity upper bound is asymptotically achieved.
\end{proof}

\section{Numerical Results}

In this section, we present some simulation and numerical results to compare the performance among the proposed coding schemes. For simplicity, we assume that $h_i$'s and $g_i$'s are i.i.d. complex Gaussian random variable of $\mathcal{CN} (0,1)$, $|f_{i,i+k}|=1$, $P_s = 1$, $P_r = 1$, and $N_0=1$. The rates in all the simulations, are averaged over 1000 fading realizations.

In Fig. \ref{fig3}, we show the capacity upper bound and the achievable rates for different $p$ values, as the number of relays increases. For the AF relaying scheme, the gap between the upper bound and the achievable rate is very small for $p=0.2$ and large $N$ values. For the DF relaying scheme, when $N$ is large, we observe that the DF rate and the capacity upper bound have the same scaling behavior.

In Fig. \ref{fig4}, we plot the achievable rates as functions of $p$. For the AF relaying scheme, the $p$ value does not need to be large to achieve most of the gains, i.e., around $p=0.3$; on the other hand, conferencing may not strictly improve the AF rate: When $p$ is close to zero, the achievable rate is lower than the case without relay conferencing, which is due to the sub-optimality of the combining scheme at the relays. For the DF relaying scheme, relay conferencing always helps, and there is a significant rate improvement as $p$ increases.

In Fig. \ref{fig5}, we plot the achievable rates as functions of the conferencing link SNR. It is observed that with medium-quality conferencing links (the SNRs of the conferencing links are around 5 dB), we achieve most of the gains introduced by relay conferencing for both the AF and DF relaying schemes.

\section{Conclusion}
In this correspondence, we investigated the achievable rate scaling laws of the DF and AF relaying schemes in a large Gaussian relay networks with conferencing links. We showed that for the DF relaying scheme, the rate scales as $\mathcal{O} \left( \log (N) \right)$, compared to $\mathcal{O} \left( \log (\log(N)) \right)$ for the case without conferencing; for the AF relaying scheme, we proved that if the channel fading coefficients $h_i$'s and $g_i$'s are i.i.d., respectively, or $N=M+1$, it asymptotically achieves the capacity upper bound as $N$ goes to infinity.

%
\IEEEpeerreviewmaketitle



\begin{figure}[h]
\centering
\includegraphics[width=.7\linewidth]{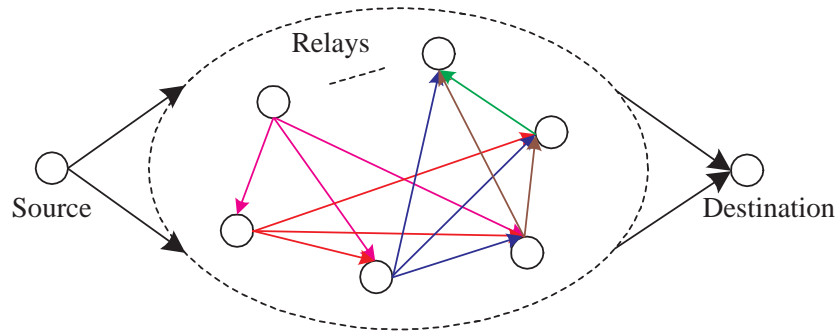}
\caption{The large relay networks with conferencing links.} \label{fig1}
\end{figure}

\begin{figure}[h]
\centering
\includegraphics[width=.7\linewidth]{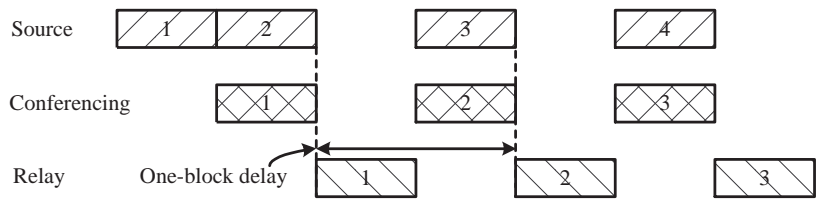}
\caption{Transmission scheduling scheme for the large relay networks with conferencing links.} \label{fig2}
\end{figure}

\begin{figure}[h]
\centering
\includegraphics[width=.7\linewidth]{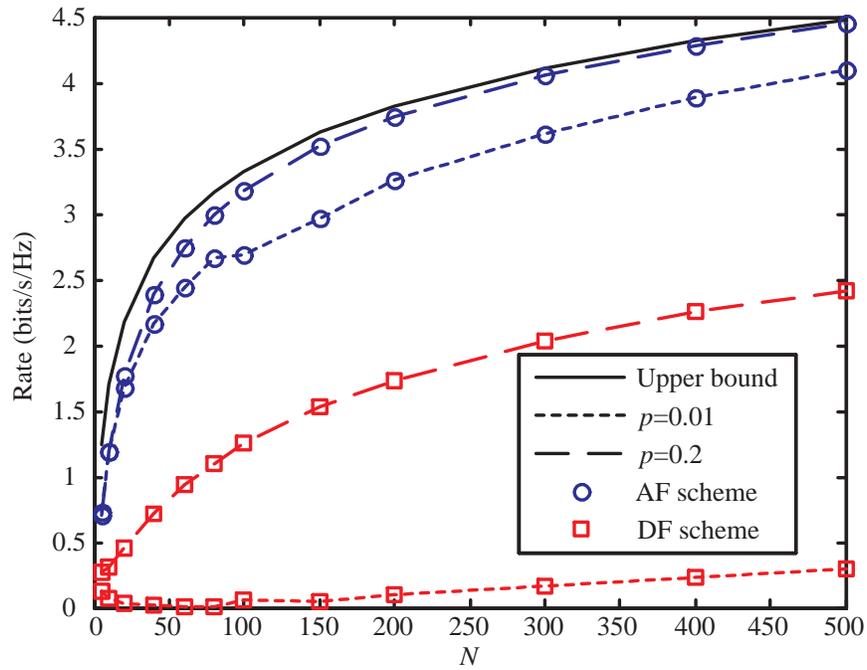}
\caption{Achievable rates vs. the number of relays, $P_s=1$, $P_r=1$, $P_c=1$, and $|f_{i,k}|=1$.} \label{fig3}
\end{figure}

\begin{figure}[h]
\centering
\includegraphics[width=.7\linewidth]{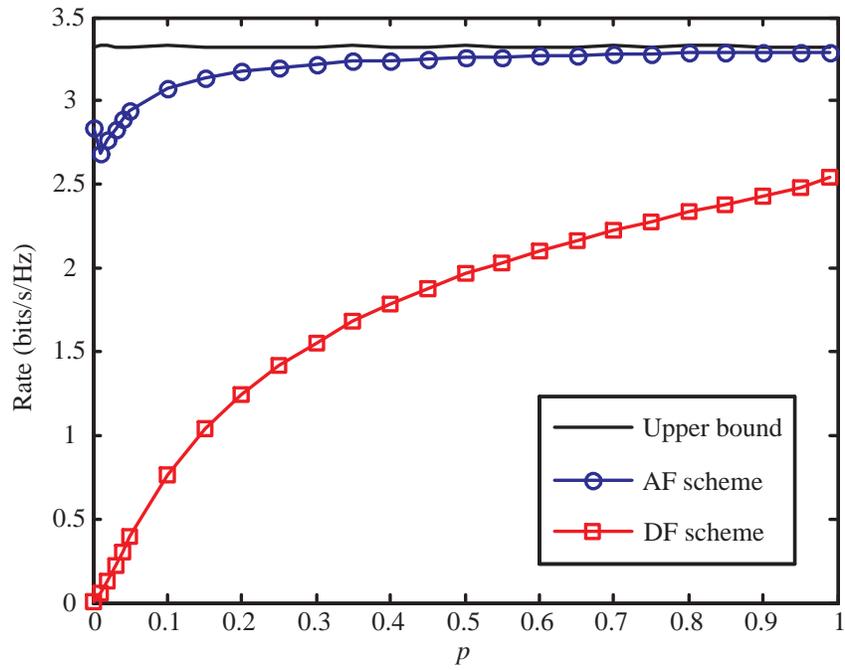}
\caption{Achievable rates vs. the conferencing ratio, $P_s=1$, $P_r=1$, $P_c=1$, $|f_{i,k}|=1$, and $N=100$.} \label{fig4}
\end{figure}

\begin{figure}[h]
\centering
\includegraphics[width=.7\linewidth]{}
\caption{Achievable rates vs. the conferencing link SNR, $P_s=1$, $P_r=1$, $|f_{i,k}|=1$, $N=100$, and $p=0.1$.} \label{fig5}
\end{figure}




\begin{thebibliography}{99}

\bibitem{cover}
T. M. Cover and A. El Gamal, ``Capacity theorem for the relay
channel,'' \emph{IEEE Trans. on Information Theory}, vol. 25, no. 5,
pp.572-584, Sep. 1979.

\bibitem{madsen}
A. Host-Madsen and J. Zhang, ``Capacity bounds and power
allocation for wireless relay channels,'' \emph{IEEE Trans. Inf.
Theory}, vol. 51, no. 6, pp. 2020-2040, Jun. 2005.

\bibitem{schein}
S. Schein and R. G. Gallager, ``The Gaussian parallel relay network,''
\emph{in Proc. of IEEE Int. Symp. Information Theory}, Italy, Jun. 2000.

\bibitem{xue}
F. Xue and S. Sandhu, ``Cooperation in a half-duplex Gaussian diamond relay channel,'' \emph{IEEE Trans.
Information Theory}, vol. 53, no. 10, pp.3806-3814, Oct. 2007.

\bibitem{resaei}
S. S. C. Resaei, S. O. Gharan, and A. K. Khandani, ``Relay scheduling in the half-duplex Gaussian parallel relay channel,'' \emph{IEEE Trans.
Information Theory}, vol. 56, no. 6, pp.2668-2687, Jun. 2010.

\bibitem{Gastpar1}
M. Gastpar, ``Uncoded transmission is exactly optimal for a simple
Gaussian sensor network,'' \emph{IEEE Transactions on Information
Theory}, vol. 54, no. 11, pp.5247-5251, Nov. 2008.

\bibitem{blocskei}
H. Bolcskei, R. U. Nabar, O. Oyman, and A. J. Paulraj, ``Capacity
scaling laws in MIMO relay networks,'' \emph{IEEE Trans. on Wireless
Communication}, vol. 5, no. 6, pp.1433-1444, Jun. 2006.

\bibitem{coso}
A. Coso and C. Ibars, ``Achievable rates for the AWGN channels
with multiple parallel relays,'' \emph{IEEE Trans. on Wireless
Comm.}, vol. 8, no. 5, pp.2524-2534, May. 2009.

\bibitem{chuan1}
C. Huang, J. Jiang, and S. Cui, ``Asymptotic capacity of large fading relay networks with random node failures,'' \emph{under revision for IEEE Trans. on Communications}, 2010.

\bibitem{cmac}
F. Willems, ``The discrete memoryless multiple access channel with partially cooperating encoders,'' \emph{IEEE Trans.
Information Theory}, vol. 29, no. 3, pp.441-445, May. 1983.

\bibitem{ron}
R. Dabora and S. D. Servetto, ``Broadcast channels with cooperating decoders,'' \emph{IEEE Trans.
Information Theory}, vol. 52, no. 12, pp.5438-5454, Dec. 2006.


\bibitem{maric}
I. Maric, R. D. Yates, and G. Kramer, ``Capacity of interference channels with partial transmitter cooperation,'' \emph{IEEE Trans.
Information Theory}, vol. 53, no. 10, pp. 3536-3548, Oct. 2007.

\bibitem{Denus}
O. Simeone, D. Gunduz, H. V. Poor, A. J. Goldsmith, and S. Shamai (Shitz), ``Compound multiple access channels with partial cooperation,'' \emph{IEEE Trans. Information Theory}, vol. 55, no. 6, pp. 2425-2441, June 2009.


\bibitem{chuan2}
C. Huang, J. Jiang, and S. Cui, ``On the achievable rates of the diamond relay channel with conferencing links,'' \emph{in Proc. of the 48th Annual Allerton Conference on Communication, Control, and Computing}, Monticello, Illinois, Sep. 2010.

\bibitem{coso2}
A. del Coso, ``Achievable rates for Gaussian channels with multiple
relays,'' Ph.D. dissertation, Universitat Politecnica de Catalunya, Spain,
2008.


\end{thebibliography}
%

\end{document}